\def\BibTeX{{\rm B\kern-.05em{\sc i\kern-.025em b}\kern-.08em
    T\kern-.1667em\lower.7ex\hbox{E}\kern-.125emX}}
\newtheorem{assumption}{Assumption}
\newtheorem{remark}{Remark}
\newtheorem{theorem}{Theorem}
\newtheorem{lemma}{Lemma}
\newtheorem{definition}{Definition}
\newtheorem{problem}{Problem}
\title{Distributed Finite-Horizon Optimal Control for Consensus with\\ Differential Privacy Guarantees}
\author{Yuwen Ma\textsuperscript{1}, Yongqiang Wang\textsuperscript{2}, Sarah K. Spurgeon\textsuperscript{1}, Boli Chen\textsuperscript{1}
\thanks{This work was supported by Engineering and Physical Sciences Research Council (EPSRC) of UK Research and Innovation (UKRI) under Grant EP/W524335/1.}
\thanks{\textsuperscript{1}Yuwen Ma, Sarah K. Spurgeon and Boli Chen are with the Department of Electronic and Electrical Engineering, University College London, London WC1E 7JE, UK, {\tt\small \{yuwen.ma.24\}, \{s.spurgeon\}, \{boli.chen\}@ucl.ac.uk}
        }
\thanks{\textsuperscript{2}Yongqiang Wang is with the Department of Electrical and Computer Engineering, Clemson University, Clemson, SC 29634, USA, {\tt\small yongqiw@clemson.edu}}
}
\begin{document}
\pagestyle{myheader}
\begin{titlepage}
    \centering
    \vspace*{\fill}
    {\Huge \bfseries Copyright Statement \\[0.5cm]}

    {\large This work has been submitted to the IEEE Conference on Decision and Control 2025. Copyright may be transferred without notice, after which this version might no longer be accessible. \\[2cm]}
    \vspace*{\fill}
\end{titlepage}
\maketitle
{ 
\begin{abstract}
This paper addresses the problem of privacy-preserving consensus control for multi-agent systems (MAS) using differential privacy. We propose a novel distributed finite-horizon linear quadratic regulator (LQR) framework, in which agents share individual state information while preserving the confidentiality of their local pairwise weight matrices, which are considered sensitive data in MAS. Protecting these matrices effectively safeguards each agent's private cost function and control preferences. Our solution injects consensus error-dependent Laplace noise into the communicated state information and employs a carefully designed time-dependent scaling factor in the local cost functions. {This approach guarantees bounded consensus and achieves rigorous $\epsilon$-differential privacy for the weight matrices without relying on specific noise distribution assumptions.} Additionally, we analytically characterize the trade-off between consensus accuracy and privacy level, offering clear guidelines on how to enhance consensus performance through appropriate scaling of the LQR weight matrices and the privacy budget.
\end{abstract}
}


\section{Introduction}
{ The emergence of large-scale network systems, such as sensor networks, intelligent transportation systems, and power grids, has driven the need for distributed control and optimization algorithms \cite{zhao2023distributed,shi2012reaching}. 
With the increasing scale of data in the network, the frequent exchange of data amplifies the risk of adversaries intercepting transmitted data and extracting sensitive information, such as references, control preferences, and individual system dynamics. 
Such vulnerabilities significantly threaten system reliability, underlining the critical importance of securing transmitted data \cite{han2016differentially}. A widely adopted method for protecting sensitive information involves injecting noise into the exchanged data. However, this approach inherently reduces control performance. Differential privacy, introduced by Dwork, provides a rigorous metric to characterize and manage this privacy-performance trade-off effectively \cite{dwork2006calibrating,dwork2014algorithmic}. Compared to encryption methods, which typically offer higher accuracy \cite{gentry2009fully}, differential privacy reduces communication overhead and provides a quantifiable measure of privacy protection.}
 
 In light of these considerations, recent research has increasingly focused on protecting transmitted data using the differential privacy mechanism \cite{huang2015differentially,huang2024differential,han2016differentially,wang2024robust,xuan2023gradient,cao2020differentially,zellner2017distributed,yazdani2022differentially,cortes2016differential}. The first exploration of cost function protection appeared in \cite{huang2015differentially}, where a fundamental trade-off between privacy level and accuracy level is established when a graph is undirected. Later, the work in \cite{huang2024differential} further relaxed the assumption of an undirected graph by adding noise to decision variables and to the estimate of the aggregated gradient to preserve cost function privacy. With guaranteed privacy for both cost functions and constraint sets, a consensus problem is solved by a fully distributed optimization algorithm in \cite{wang2024robust}. Moreover, the proposed algorithm achieves rigorous $\epsilon$-differential privacy and accurate convergence for the first time. Further advancements in differentially private distributed optimization and control algorithms are presented in \cite{cao2020differentially,zellner2017distributed,yazdani2022differentially}.
 
{ 
In this paper, we investigate the design of differentially private distributed finite-horizon linear quadratic regulators (LQR) to achieve consensus in multi-agent systems (MAS). Our primary goal is to protect the quadratic cost function from adversarial manipulation, preventing agents from minimizing their own costs at the expense of others or causing system miscoordination. We focus specifically on safeguarding the pairwise weighting matrices, which sufficiently reflect the operational priorities expressed through the cost function. Protecting these matrices directly often yields superior outcomes compared to methods that protect the entire cost function, as seen in existing studies \cite{huang2015differentially,huang2024differential,wang2024robust}. Such comprehensive protection approaches typically result in higher global sensitivity and, consequently, a more conservative privacy-performance trade-off. 
Unlike many existing methods that inject noise under the assumption that the noise variance or its expectation remains bounded to ensure bounded control performance \cite{huang2015differentially,huang2024differential,wang2024robust}, our proposed algorithm employs a carefully tailored Laplace distribution conditioned explicitly on the consensus error. This design removes the need for bounded noise assumptions without compromising closed-loop performance, primarily due to the introduction of a novel scaling factor embedded within the cost function. Using Lyapunov analysis, we demonstrate that our approach achieves bounded consensus error while guaranteeing $\epsilon$-differential privacy, even as the number of iterations increases indefinitely. The analytical results characterize the inherent trade-off between consensus accuracy and privacy level, highlighting a distinct accuracy improvement over existing approaches \cite{huang2015differentially}. Furthermore, our results offer clear guidelines on enhancing consensus performance through the strategic scaling of LQR weight matrices.
}

 
\textbf{Notation:} ${\mathbb{R}^{m\times n}}$ and ${\mathbb{R}^{n}}$ are the sets of all ${m\times n}$-dimensional real matrices and $n$-dimensional real vectors, respectively. $\mathbb{N}$ denotes the set of natural numbers, while $\mathbb{N}_+$ represents the set of positive integers. ${\mathbb{S}^{n\times n}}$ and ${\mathbb{S}_+^{n\times n}}$ denote the set of all $n\times n$-dimensional positive semi-definite matrices and positive-definite matrices, respectively. ${\mathbf{I}}$ is an identity matrix with appropriate dimensions, $\mathbf{0}$ is a zero matrix, and ${\mathbf{1}}$ is a column vector of ones. $\mathrm{diag}\left\{ {{A}_{1}},\ldots, {{A}_{n}}\right\}$ is a diagonal matrix with ${{{A}_{1}},\ldots,{{A}_{n}}}$ as the diagonal elements. {$\mathrm{range}(\cdot)$} indicates the range of an operator. ${\left\| \cdot \right\|_\infty}$ stands for the $\ell_\infty$-norm. Unless otherwise specified, $\|\cdot\|$ represents the \( \ell_2 \)-norm. Given a square matrix $A\in \mathbb{R}^{n\times n}$, $\lambda_{0}(A)$ denotes the smallest positive eigenvalue, and { $\mathrm{tr}(A)$ represents its trace.} The mathematical expectation is denoted by $\mathbb{E}[\cdot]$. $\mathbb{P}(B)$ is the probability of event $B$.

\section{Problem Formulation}

Consider a multi-agent system of $N$ agents with a discrete-time single integrator model:
\begin{equation}\label{e1}
    x_i(t+1)=x_i(t)+u_i(t),\ i=1,2,\ldots,N, \ t\in \mathbb{N},
\end{equation}
where $x_i(t)\in \mathbb{R}^n$ and $u_i(t)\in \mathbb{R}^n$ represent the state and control input, respectively. The initial state $x_i(0)$ is given for all $i=1,2,\ldots,N$.

There exists a communication graph $\mathcal{G}(\mathcal{V} ,\mathcal{E},\mathcal{A} )$ among the above $N$-agent system. The graph $\mathcal{G}$ is described as a directed graph consisting of a node set $\mathcal{V}=\{1,2,\ldots,N\}$, an edge set $\mathcal{E} \subseteq \mathcal{V} \times \mathcal{V}$ and an adjacency matrix $\mathcal{A}=\left[\begin{matrix}a_{ij}\end{matrix}\right]\in \mathbb{R}^{N\times N}$. For $\left( j,i \right)\in \mathcal{E}$, we mean that there is a link from node $j$ to node $i$, or equivalently node $j$ is an in-neighbor of node $i$. Let ${{\mathcal{N}}_{i}= \{ {j\in \mathcal{V} |(j,i)\in \mathcal{E} }\}}$ denote the set of in-neighbors of node $i$. Moreover, if $j\in {\mathcal{N}}_{i}$, then $a_{ij}=1$ and $a_{ii}=0$ otherwise. The associated Laplacian matrix $\mathcal{L}={{[{{l}_{ij}}]}_{N\times N}}$ is defined as ${{l}_{ii}}=\sum\limits_{k\in {{\mathcal{N}}_{i}}}{{{a}_{ik}}}$ and ${{l}_{ij}}=-{{a}_{ij}}$ for $(j,i)\in \mathcal{E}$.


The following assumption and lemma pertaining to the communication graph will be utilized later:

\begin{assumption}\label{ass1}
    The communication graph $\mathcal{G}(\mathcal{V} ,\mathcal{E},\mathcal{A} )$ contains a directed spanning tree.
\end{assumption}
\begin{lemma}[\cite{ren2005consensus}]\label{lem1}
    Zero is a simple eigenvalue of the Laplacian matrix $\mathcal{L}$ if and only if the graph $\mathcal{G}$ contains a spanning tree, and $\mathcal{L}\mathbf{1}=0$.
\end{lemma}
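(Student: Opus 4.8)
The relation $\mathcal{L}\mathbf{1}=\mathbf{0}$ is immediate from the construction of $\mathcal{L}$: the $i$-th row sums to $l_{ii}+\sum_{j\neq i}l_{ij}=\sum_{k\in\mathcal{N}_i}a_{ik}-\sum_{j\neq i}a_{ij}=0$ because $a_{ij}=0$ whenever $j\notin\mathcal{N}_i$; hence $0$ is always an eigenvalue of $\mathcal{L}$ with right eigenvector $\mathbf{1}$, and the content of the lemma is the equivalence between the spanning-tree property and simplicity of this eigenvalue (which I read as algebraic multiplicity one, a condition that in particular forces $\dim\ker\mathcal{L}=1$ and no nontrivial Jordan block at $0$). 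The plan is to translate simplicity into a non-vanishing condition on a coefficient of the characteristic polynomial $p(\lambda)=\det(\lambda\mathbf{I}-\mathcal{L})=\lambda^{N}+c_{N-1}\lambda^{N-1}+\cdots+c_{1}\lambda+c_{0}$.

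From $\mathcal{L}\mathbf{1}=\mathbf{0}$ we get $c_{0}=(-1)^{N}\det\mathcal{L}=0$, so $p(\lambda)=\lambda\,q(\lambda)$ with $q(\lambda)=\lambda^{N-1}+c_{N-1}\lambda^{N-2}+\cdots+c_{1}$, and $0$ is a simple root of $p$ iff $q(0)=c_{1}\neq 0$. Using the standard identity expressing $c_{1}$ through the principal minors of $\mathcal{L}$, one has $c_{1}=(-1)^{N-1}\sum_{i=1}^{N}\det\mathcal{L}^{(i)}$, where $\mathcal{L}^{(i)}$ is $\mathcal{L}$ with its $i$-th row and column deleted. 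Now I would invoke the directed matrix-tree (Tutte) theorem: for the in-degree Laplacian $\mathcal{L}=D_{\mathrm{in}}-\mathcal{A}$ used here, $\det\mathcal{L}^{(i)}$ equals the number of spanning trees of $\mathcal{G}$ rooted at node $i$ (tree edges oriented along the directed paths emanating from $i$), hence is a non-negative integer; therefore $\sum_{i}\det\mathcal{L}^{(i)}>0$ exactly when some node is the root of a spanning tree, i.e. exactly when $\mathcal{G}$ contains a directed spanning tree in the sense of Assumption~\ref{ass1}. Chaining these equivalences proves the lemma, and $c_1\neq0$ simultaneously gives the no-Jordan-block claim.

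I expect the ``only if'' half --- equivalently, that when $\mathcal{G}$ has no directed spanning tree every cofactor $\det\mathcal{L}^{(i)}$ vanishes --- to be the main obstacle, and within it the delicate technical point is fixing the correct orientation convention (in-degree versus out-degree Laplacian, arborescences converging versus diverging) in the matrix-tree theorem. If one prefers to avoid that theorem, the ``if'' direction, which is all Assumption~\ref{ass1} supplies, can be obtained directly by a maximum principle: any $\xi\in\ker\mathcal{L}$ satisfies $\xi_i=\tfrac{1}{l_{ii}}\sum_{j\in\mathcal{N}_i}a_{ij}\xi_j$ for each $i$ with $\mathcal{N}_i\neq\emptyset$, so the vertex sets where $\xi$ attains its maximum and where it attains its minimum are each closed under passing to in-neighbours; tracing, backwards, a directed path from a spanning-tree root $r$ to a vertex at which $\xi$ is extremal shows $r$ is both a maximiser and a minimiser, whence $\xi=\xi_r\mathbf{1}$ and $\dim\ker\mathcal{L}=1$. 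Semisimplicity of the eigenvalue $0$ then follows by noting that $P:=\mathbf{I}-d_{\max}^{-1}\mathcal{L}$ with $d_{\max}:=\max_i l_{ii}$ is row-stochastic and that powers of a row-stochastic matrix are non-expansive in the $\ell_\infty$-norm, which forbids a generalized eigenvector at eigenvalue $1$ of $P$ and hence at eigenvalue $0$ of $\mathcal{L}=d_{\max}(\mathbf{I}-P)$. Conversely, if $\mathcal{G}$ lacks a directed spanning tree its condensation has at least two source strongly connected components, and assigning an independent constant value to each (the remaining vertices being determined uniquely, since the principal block of $\mathcal{L}$ on them is a nonsingular M-matrix) produces two independent kernel vectors, so $\dim\ker\mathcal{L}\geq 2$.
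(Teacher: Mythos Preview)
The paper supplies no proof of this lemma; it is quoted verbatim as a known result from Ren and Beard \cite{ren2005consensus}. Your argument is correct and in fact reproduces the essential content of the cited source. Both routes you outline---the Tutte matrix-tree identity giving $c_1=(-1)^{N-1}\sum_i\det\mathcal{L}^{(i)}\neq 0$, and the maximum-principle argument combined with the row-stochastic rescaling $P=\mathbf{I}-d_{\max}^{-1}\mathcal{L}$ to rule out a Jordan block---are standard, and your orientation bookkeeping is right: with the paper's in-degree Laplacian ($a_{ij}=1$ for an edge $j\to i$, $l_{ii}=\sum_k a_{ik}$), the cofactor $\det\mathcal{L}^{(i)}$ indeed counts spanning out-arborescences rooted at $i$, which is exactly the spanning-tree notion of Assumption~\ref{ass1}. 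The converse via the condensation having at least two source strongly connected components is likewise correct; the invertibility of the principal block $\mathcal{L}_{TT}$ on the non-source vertices follows because it is a weakly diagonally dominant Z-matrix in which every vertex of $T$ has a directed path back to some source SCC, so strict dominance propagates and the matrix is a nonsingular M-matrix as you assert.
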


To address the consensus problem of the MAS described in \eqref{e1}, one can use the distributed finite-horizon LQR with 
a typical cost function in the following form 
\begin{align}\label{e2}
    J_i(x(t))&\triangleq \sum\limits_{k=0}^{T-1}\Biggl(u_i^{\top}(t+k)R_iu_i(t+k)+\!\!\sum\limits_{j\in \mathcal{N}_i}a_{ij} (x_i(t+k)\Biggr.\nonumber\\
    &\Biggl.\quad-x_j(t))^{\top}Q_i(x_i(t+k)-x_j(t))\Biggr),
\end{align}
where $Q_i \in \mathcal{Q}_i \subset {\mathbb{S}^{n\times n}}$, $R_i\in \mathcal{R}_i \subset {\mathbb{S}_+^{n\times n}}$ are weight matrices designed to balance the consensus errors and energy consumption. Herein, $\mathcal{Q}_i$ and $\mathcal{R}_i$ represent the finite sets of $Q_i$ and $R_i$, respectively.


 
The overall objective is to achieve consensus with differential privacy guarantees for the quadratic cost function. This approach prevents agents from minimizing their own costs at the expense of others or causing system divergence. We specifically focus on protecting the pairwise weighting matrices $(Q_i, R_i)$, which effectively capture the operational priorities encoded in the cost function.




Let $W=\{(Q_i, R_i)\}_{i=1}^{N}$ denote the database that stores, for every agent $i$, the pair consisting of its weight matrices $Q_i$ $R_i$ of the quadratic cost. Define the set of all admissible databases as $\mathcal{W}=\{W\ |\ \forall Q_i\in \mathcal{Q}_i,\ \forall R_i\in \mathcal{R}_i,\ i=1,2,\ldots,N\}$. Additionally, we define a randomized mechanism $\mathbf{\hat{\mathcal{M}}}$ that maps a database $W$ to a random variable $\mathbf{\hat{\mathcal{M}}}(W)$. $\mathbf{\hat{\mathcal{M}}}(W)$ is typically the transmitted data (agents' states). Let $\text{range}(\mathbf{\hat{\mathcal{M}}})$ denote the set of all possible outputs of $\mathbf{\hat{\mathcal{M}}}(W)$. The adjacency relation between databases and the differential privacy of a mechanism $\mathbf{\hat{\mathcal{M}}}$ is subsequently defined.

\begin{definition}\label{de3}
    Two databases $W=\{(Q_i, R_i)\}_{i=1}^{N}$ and $W'=\{(Q'_i, R'_i)\}_{i=1}^{N}$ are adjacent if there exists $i\in \{1,2,\ldots,N\}$ {such that for all $j\neq i$, the pairs satisfy 
    \begin{multline*}
      \left(\frac{Q_j}{\mathrm{tr}(Q_j+R_j)}\!=\!\frac{Q'_j}{\mathrm{tr}(Q'_j+R'_j)}\right)\\ \land \left(\frac{R_j}{\mathrm{tr}(Q_j+R_j)}\!=\!\frac{R'_j}{\mathrm{tr}(Q'_j+R'_j)}\right)  
    \end{multline*}}
\end{definition}
Note that we aim to protect each weighting-matrix pair $(Q_i,R_i)$ as a single unit rather than hiding $Q_i$ and $R_i$ separately. Since the LQR feedback gain is scale-invariant (i.e., it depends only on their relative weighting), Definition \ref{de3} rescales every pair to a canonical form before declaring two databases adjacent.

\begin{definition}[\cite{dwork2006calibrating}]\label{de4}
    A mechanism $\mathbf{\hat{\mathcal{M}}}$ is $\epsilon$-differentially private if for any adjacent databases $W,W'\in \mathcal{W}$ and any set $\mathcal{S}\subset{\text{range}(\mathbf{\hat{\mathcal{M}}})}$, it holds that
    \begin{equation}\label{e3}
        \mathbb{P}\left(\mathbf{\hat{\mathcal{M}}}(W) \in \mathcal{S}\right)\leq \mathrm{e}^\epsilon\mathbb{P}\left(\mathbf{\hat{\mathcal{M}}}(W') \in \mathcal{S}\right)
    \end{equation}
\end{definition}
where $\epsilon>0$ is called the privacy budget, and a smaller $\epsilon$ implies a higher privacy level of $\mathbf{\hat{\mathcal{M}}}$. Combining Definitions \ref{de3} and \ref{de4}, differential privacy characterizes the sensitivity of the mechanism $\mathbf{\hat{\mathcal{M}}}$ with respect to the pairwise weight matrices $(Q_i, R_i)$. 

Let $\mathbf{\mathcal{M}}_t$ denote the finite-horizon LQR control for agents $i=1,2,\ldots,N$, at time $t$. The $\mathbf{\mathcal{M}}_t(W)$ represents the transmitted information $x_i(t+1)$ and $\text{range}(\mathbf{\mathcal{M}}_t)$ then denotes all the possible sets containing agents' states $x_i(t+1),\ i=1,2,\ldots,N$. We adopt the Laplacian mechanism of differential privacy for $\mathbf{\mathcal{M}}_t$ as follows.
\begin{lemma}[\cite{dwork2014algorithmic,cortes2016differential}]\label{lem2}
    For a given mechanism $\mathbf{\mathcal{M}}_t$ with $\text{range}(\mathbf{\mathcal{M}}_t)=\mathbb{R}^n$, let $$\triangle=\max\limits_{W,W'}\|\mathbf{\mathcal{M}}_t(W)-\mathbf{\mathcal{M}}_t(W')\|_\infty$$ be the sensitivity of $\mathbf{\mathcal{M}}_t$, where $W$ and $W'$ are adjacent databases. Then the mechanism $\hat{\mathbf{\mathcal{M}}}_t(W)=\mathbf{\mathcal{M}}_t(W)+\eta$ preserves $\epsilon$-differential privacy in Definition \ref{de4}, where $\eta \in \mathbb{R}^n$ and each component of $\eta$ is an i.i.d. Laplace noise subject to $\text{Lap}(n\triangle/\epsilon)$.
\end{lemma}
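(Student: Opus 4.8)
The plan is to adapt the classical analysis of the Laplace mechanism, the only extra ingredient being the passage from the $\ell_\infty$ sensitivity adopted here to the $\ell_1$ distance that arises naturally when one multiplies independent Laplace densities. First I would fix two adjacent databases $W,W'\in\mathcal{W}$, abbreviate $\mu=\mathbf{\mathcal{M}}_t(W)\in\mathbb{R}^n$ and $\mu'=\mathbf{\mathcal{M}}_t(W')\in\mathbb{R}^n$, and recall that because the coordinates of $\eta$ are i.i.d.\ $\text{Lap}(n\triangle/\epsilon)$, the output $\hat{\mathbf{\mathcal{M}}}_t(W)=\mu+\eta$ admits on $\mathbb{R}^n$ the product density $p_W(z)=\prod_{k=1}^n\frac{\epsilon}{2n\triangle}\exp(-\frac{\epsilon}{n\triangle}|z_k-\mu_k|)$, and likewise $p_{W'}$ with $\mu'$ in place of $\mu$.

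Next I would bound the likelihood ratio pointwise in $z$. Forming $p_W(z)/p_{W'}(z)$, the normalising constants cancel and what remains is $\exp(\frac{\epsilon}{n\triangle}\sum_{k=1}^n(|z_k-\mu'_k|-|z_k-\mu_k|))$. The triangle inequality gives $|z_k-\mu'_k|-|z_k-\mu_k|\le|\mu_k-\mu'_k|$ for every $k$, hence $p_W(z)/p_{W'}(z)\le\exp(\frac{\epsilon}{n\triangle}\|\mu-\mu'\|_1)$. Combining the elementary bound $\|v\|_1\le n\|v\|_\infty$ with the definition of the sensitivity $\triangle$ yields $\|\mu-\mu'\|_1\le n\|\mu-\mu'\|_\infty\le n\triangle$, so that $p_W(z)/p_{W'}(z)\le\mathrm{e}^\epsilon$ for all $z\in\mathbb{R}^n$; this is exactly where the factor $n$ in the scale parameter is consumed.

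Finally I would integrate the pointwise bound over an arbitrary measurable $\mathcal{S}\subset\mathrm{range}(\hat{\mathbf{\mathcal{M}}}_t)=\mathbb{R}^n$: $\mathbb{P}(\hat{\mathbf{\mathcal{M}}}_t(W)\in\mathcal{S})=\int_{\mathcal{S}}p_W(z)\,dz\le\mathrm{e}^\epsilon\int_{\mathcal{S}}p_{W'}(z)\,dz=\mathrm{e}^\epsilon\,\mathbb{P}(\hat{\mathbf{\mathcal{M}}}_t(W')\in\mathcal{S})$, which is precisely inequality \eqref{e3}; since $W,W'$ and $\mathcal{S}$ were arbitrary, $\hat{\mathbf{\mathcal{M}}}_t$ is $\epsilon$-differentially private in the sense of Definition \ref{de4}.

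There is no genuinely hard step here — the statement is a standard fact — and the only thing that requires care is the bookkeeping that matches the worst-case $\ell_1/\ell_\infty$ gap (a factor $n$) to the inflated scale $n\triangle/\epsilon$. I would also remark that the argument uses only that $\mathbf{\mathcal{M}}_t$ is a deterministic function of $W$, with all randomness supplied by $\eta$, and that the independence of the coordinates of $\eta$ is exactly what makes the joint density factor so that the coordinatewise triangle inequality applies.
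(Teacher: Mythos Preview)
Your argument is correct and is exactly the standard proof of the Laplace mechanism, with the bookkeeping for the $\ell_\infty$-to-$\ell_1$ factor of $n$ handled properly. The paper itself does not prove this lemma; it simply cites \cite{dwork2014algorithmic,cortes2016differential} and states the result, so there is no paper proof to compare against.
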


The problem can now be defined as:
\begin{problem}\label{pro1}
Design a 
finite-horizon LQR control mechanism such that the MAS in \eqref{e1} can achieve bounded consensus, satisfying
$\mathbb{E}(\|x_i(t)\|^2) < \infty$ for all $t\ge0$. Additionally, there exists a scalar $\sigma>0$ such that $\lim\limits_{t\rightarrow\infty}\sum\limits_{i=1}^N\mathbb{E}(\|x_i(t)-\frac{1}{N}\sum\limits_{j=1}^Nx_j(t)\|^2) \leq \sigma$, $\forall i\in \mathcal{V}$. 
Meanwhile, the controller can preserve $\epsilon$-differential privacy under the adjacency relation described in Definition \ref{de3}.
\end{problem}

\section{Main Results}

\subsection{Differentially Private LQR Consensus Mechanism}
In view of Lemma \ref{lem2}, 
the cost function in \eqref{e2} is designed as follows
\begin{align}\label{e4}
    J_i(&x(t),c(t))\triangleq \sum\limits_{k=0}^{T-1}\Biggl(u_i^{\top}(t\!+\!k)R_iu_i(t\!+\!k)+c(t)\sum\limits_{j\in \mathcal{N}_i}a_{ij}\Biggr.\nonumber\\
    &\Biggl.\quad  (x_i(t+k)-\hat{x}_j(t))^{\top}Q_i(x_i(t+k)-\hat{x}_j(t))\Biggr),
\end{align}
where $c(t) \in \mathbb{R}_{+}$ is a sequence to be designed, $\hat{x}_j(t)$ is 
the perturbed state received from the neighbors, defined as
\begin{equation}\label{eq:noisyinfor}
  \hat{x}_j(t)\triangleq x_j(t)+\eta_j(t),
\end{equation}
with $\eta_j(t)$ the injected Laplace noise for privacy guarantees, as will be elaborated later in Section~\ref{subsec:dp}.
An optimal control sequence $U^\ast_i$ is obtained by minimizing $J_i(x(t),c(t))$, with the control input $u_i(t)=u_i^{\ast}(t)$. We now proceed to present the proposed Algorithm 1 to solve Problem \ref{pro1}. 

Within this framework, we treat Algorithm 1 as a sequence of LQR mechanisms: $\mathbf{\mathcal{M}}=$ $\{\mathbf{\mathcal{M}}_0,\mathbf{\mathcal{M}}_1,\mathbf{\mathcal{M}}_2,\ldots\}$, where $\mathbf{\mathcal{M}}_t$ denotes the execution of Steps 4-9 at time $t$. {The selection of $c(t)$ and $p(t)$ will be clarified in the following subsections.}

\begin{algorithm}
\caption{Differentially Private distributed finite-horizon LQR-based Consensus Algorithm}
\begin{algorithmic}[1]
\STATE \textbf{Initialization:} $x_i(0)\in \mathbb{R}^n$, $\eta_i(0)=\mathbf{0}_n$, $\epsilon \in\mathbb{R}_+$, feasible sets $\mathcal{Q}_i$ and $\mathcal{R}_i$, $i=1,2,\ldots,N$.
\STATE Each agent $i$ {selects} $Q_i$ and $R_i$ from the feasible sets.
\STATE \textbf{Set} {time series} $c(t)$, $p(t)$ for all agents.
\FOR{$t=0, 1, 2, \ldots$}
    \FOR{$i=1, 2, \ldots, N$}
    \STATE Agent $i$ sends $\hat{x}_i(t)=x_i(t)+\eta_i(t)$ to its neighbors, and computes $\triangle_{i,t}$ by \eqref{e7}.
    \ENDFOR \ $i=N$
    \STATE Each agent $i$ obtains $u_i(t)$ by minimizing the $J_i$ in \eqref{e4} and updates the state $x_i(t+1)$ by \eqref{e1}. 
    \STATE Each agent $i$ generates noise $\eta_i(t+1)$ for the next iteration, with each component independently drawn from \eqref{eq:laplace}.
    \ENDFOR
\end{algorithmic}
\end{algorithm}

\subsection{Privacy Analysis}\label{subsec:dp}
In this subsection, we prove that when $c(t)$ and $p(t)$ in Step 3 are properly designed, then Algorithm 1 is differentially private. We begin by theoretically analyzing the solution to the minimization of \eqref{e4}.

Combining \eqref{e1} and \eqref{e4}, it follows that the unconstrained control input $u_i(t)$ (the first item of the optimal control sequence $U_i^{\ast}$), by explicit MPC \cite{bemporad2002model}, can be represented in the following
\begin{align}\label{eqinput}
    u_i(t) = c(t)K_{Q_i,R_i,t}\sum\limits_{j\in\mathcal{N}_i}a_{ij}  (\hat{x}_j(t)-x_i(t)),
\end{align}where $K_{Q_i,R_i,t}$ is a time-varying gain matrix depending on $Q_i$ and $R_i$.  One can refer to \cite{bemporad2002model,seron2000global} to compute the matrix $K_{Q_i,R_i,t}$. 
If $\mathcal{Q}_i, \mathcal{R}_i$ are bounded, $K_{Q_i,R_i,t}$ remains bounded for all $t\in\mathbb{N}$.
The closed-loop system dynamics for agent $i$ under \eqref{eqinput} are given by
\begin{equation}\label{e6}
    x_i(t+1)=x_i(t)+c(t)\sum\limits_{j\in \mathcal{N}_i}a_{ij}K_{Q_i,R_i,t}(\hat{x}_j(t)-x_i(t)).
\end{equation} We further define 
\begin{align}\label{e7}
    \triangle_{i,t}&\triangleq\max\limits_{W,W'}\{\|K_{Q_i,R_i,t}-K_{Q'_i,R'_i,t}\|_\infty:\ (Q_i,R_i)\in W,\nonumber\\ &\quad (Q'_i,R'_i)\in W', W\ \text{and}\ W'\ \text{satisfy Definition \ref{de3}}\}.
\end{align}
It is straight forward to show that $\triangle_{i,t}$ is bounded for all $t\in\mathbb{N}$. 
The Laplace noise $\eta_i(t)$ is then designed such that each component of $\eta_i$ is an independent and identically distributed Laplace noise with the consensus error-dependent scale
\begin{multline}\label{eq:laplace}
\hspace{-3mm}\eta_i(t+1) \!\sim\! \text{Lap}\!\left(\!nc(t)p(t){\triangle}_{i,t}\|\sum\limits_{j\in\mathcal{N}_i}\!\!a_{ij}[\hat{x}_j(t)\!-\!x_i(t)]\|_\infty/\epsilon \!\right)\!\!,  \\
t\in\mathbb{N},
\end{multline}
where $\eta_i(0)\!=\!\mathbf{0}_n$, $p(t) \in\mathbb{R}_+$ is a sequence to be co-designed with $c(t)$. 

\begin{assumption}\label{ass2}
$\sum\limits_{t=0}^{\infty}\frac{1}{p(t)}<\infty$.
\end{assumption}

Now, the differential privacy of the proposed algorithm is characterized.
\begin{theorem}\label{the1}
    Suppose Assumption \ref{ass1} holds and $p(t)$ is designed such that Assumption~\ref{ass2} is fulfilled.  Algorithm 1 ensures that $\mathbf{\mathcal{M}}_t$ preserves $\frac{\epsilon}{p(t)}$-differential privacy at time $t$. Algorithm 1 ($\mathbf{\mathcal{M}}$) preserves $\sum\limits_{t=0}^{\infty}\frac{\epsilon}{p(t)}$-differential privacy when the number of iterations tends to infinity.
\end{theorem}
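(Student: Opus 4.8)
The plan is to prove the per-iteration bound for $\mathbf{\mathcal{M}}_t$ first, and then obtain the statement for $\mathbf{\mathcal{M}}$ by the composition theorem for differential privacy, with Assumption~\ref{ass2} guaranteeing that the accumulated budget is finite. For the per-iteration bound I would start by reducing to a pair of adjacent databases $W,W'$ which, by Definition~\ref{de3}, differ only in the pair $(Q_{i_0},R_{i_0})$ of a single agent $i_0$, while every other agent $j$ shares the canonical pair $\big(Q_j/\mathrm{tr}(Q_j+R_j),\,R_j/\mathrm{tr}(Q_j+R_j)\big)$. The structural fact that drives the argument is that the explicit-MPC gain $K_{Q_j,R_j,t}$ in \eqref{eqinput} is invariant under a common rescaling of $(Q_j,R_j)$, so $K_{Q_j,R_j,t}=K_{Q'_j,R'_j,t}$ for every $j\neq i_0$ and every $t$; hence in the closed loop \eqref{e6} only agent $i_0$'s update, and therefore only the transmitted message $\hat{x}_{i_0}(t+1)$, is affected by $W\to W'$.

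Next I would estimate the sensitivity of that single message at iteration $t$. Conditioning on the data available at the beginning of the iteration (the states $x(t)$ and the received perturbed states $\hat{x}(t)$), equations \eqref{e6} and \eqref{e7} bound $\|x_{i_0}(t+1)-x'_{i_0}(t+1)\|_\infty$ by $\triangle:=c(t)\,\triangle_{i_0,t}\,\big\|\sum_{j\in\mathcal{N}_{i_0}}a_{i_0 j}(\hat{x}_j(t)-x_{i_0}(t))\big\|_\infty$. The scale of the Laplace noise $\eta_{i_0}(t+1)$ prescribed in \eqref{eq:laplace} is then exactly $n\triangle\big/(\epsilon/p(t))$, which is precisely the scale Lemma~\ref{lem2} requires for $\frac{\epsilon}{p(t)}$-differential privacy of a mechanism with $\ell_\infty$-sensitivity $\triangle$; moreover this scale depends only on the conditioned-upon quantities, hence is the same under $W$ and $W'$. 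Applying Lemma~\ref{lem2} to the coordinate $\hat{x}_{i_0}(t+1)\in\mathbb{R}^n$ — the remaining coordinates being deterministic and identical under $W,W'$ — then yields that $\mathbf{\mathcal{M}}_t$ is $\frac{\epsilon}{p(t)}$-differentially private. The point worth highlighting is that this is exactly why the noise in \eqref{eq:laplace} is scaled by the consensus error: the perturbation tracks the trajectory-dependent sensitivity $\triangle$ automatically, so the per-step budget is $\epsilon/p(t)$ with no boundedness assumption on $\triangle$ needed.

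Finally, treating $\mathbf{\mathcal{M}}=\{\mathbf{\mathcal{M}}_t\}_{t\ge0}$ as the adaptive composition of the $\mathbf{\mathcal{M}}_t$ (each $\mathbf{\mathcal{M}}_t$ being a randomized function of $W$ and the previously transmitted states), the sequential composition theorem makes $\{\mathbf{\mathcal{M}}_0,\dots,\mathbf{\mathcal{M}}_T\}$ $\big(\sum_{t=0}^T\epsilon/p(t)\big)$-differentially private for each finite $T$; letting $T\to\infty$ and using Assumption~\ref{ass2}, $\sum_{t=0}^{\infty}\epsilon/p(t)=\epsilon\sum_{t=0}^{\infty}1/p(t)<\infty$, and the inequality in Definition~\ref{de4} with this limiting exponent is preserved in the limit since cylinder sets generate the relevant $\sigma$-algebra on the infinite transcript. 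I expect the main obstacle to be making the conditioning in the middle step airtight: the latent state $x_{i_0}(t)$ itself inherits a dependence on $(Q_{i_0},R_{i_0})$ through the past gains, so one must organize the argument — say, by an inductive coupling of the two executions along the transmitted sequence, or by treating the agents' states as part of what is conditioned on — so that at each iteration Lemma~\ref{lem2} is invoked with one and the same, $W$-independent Laplace scale. I expect this bookkeeping, rather than any new estimate, to be the crux.
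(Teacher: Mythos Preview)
Your proposal is correct and follows essentially the same route as the paper: bound the per-step $\ell_\infty$-sensitivity via \eqref{e6}--\eqref{e7}, match it to the Laplace scale in \eqref{eq:laplace} so that Lemma~\ref{lem2} yields $\frac{\epsilon}{p(t)}$-differential privacy for $\mathcal{M}_t$, and then invoke the Adaptive Sequential Composition Theorem together with Assumption~\ref{ass2} for the infinite-horizon claim. You are in fact more explicit than the paper on the scale-invariance of $K_{Q_j,R_j,t}$ (which is what makes $x_j(t+1)=x'_j(t+1)$ for $j\neq i_0$) and on the conditioning/coupling needed to keep the Laplace scale $W$-independent at each step; the paper leaves both points implicit.
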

\begin{proof}
    We first analyze the privacy property of $\mathbf{\mathcal{M}}_t$ at time $t$. Given the adjacent databases $W$ and $W'$ and the resulting  state vectors {$\mathbf{\mathcal{M}}_t(W)=\left[\begin{matrix}
        x_1^{\top}(t+1)  & x_2^{\top}(t+1) & \cdots& x^{\top}_N(t+1)
\end{matrix}\right]^{\top}$, $\mathbf{\mathcal{M}}_t(W')=\left[\begin{matrix}
        x'^{\top}_1(t+1) & x'^{\top}_2(t+1) & \cdots& x'^{\top}_N(t+1)
\end{matrix}\right]^{\top}$, }
where $x_j(t+1)=x'_j(t+1), j\neq i$, 
Then, the sensitivity of $\mathbf{\mathcal{M}}_t$ is derived as 
\begin{align*}
    &\max\limits_{W,W'}\|\mathcal{M}_t(W)\!-\!\mathcal{M}_t(W')\|_\infty
    \\
    &\quad\leq c(t)\|\sum\limits_{j\in\mathcal{N}_i}a_{ij}[\hat{x}_j(t)-x_i(t)]\|_\infty\triangle_{i,t}
\end{align*}
where the ``$\leq$'' is obtained from equations \eqref{e6} and \eqref{e7}. Consequently, by Lemma \ref{lem2}, the differential privacy level $\epsilon_t$ of $\mathbf{\mathcal{M}}_t$ satisfies 
$$\epsilon_t=\frac{n\times\max\limits_{W,W'}\|\mathcal{M}_t(W)-\mathcal{M}_t(W')\|_\infty}{nc(t)p(t){\triangle}_{i,t}\|\sum\limits_{j\in\mathcal{N}_i}a_{ij}[\hat{x}_j(t)-x_i(t)]\|_\infty/\epsilon}\leq\frac{\epsilon}{p(t)}.$$ Thus, we claim that $\mathbf{\mathcal{M}}_t$ preserves at least $\frac{\epsilon}{p(t)}$-differential privacy at time $t$. Since $\text{range}(\mathbf{\mathcal{M}}_t(W))$ depends on $\text{range}(\mathbf{\mathcal{M}}_{0}(W)),\ \text{range}(\mathbf{\mathcal{M}}_{1}(W)),\ \ldots,\ \text{range}(\mathbf{\mathcal{M}}_{t-1}(W))$, the Adaptive Sequential Composition Theorem \cite{dwork2014algorithmic} implies that Algorithm 1 ($\mathbf{\mathcal{M}}$) preserves $\sum_{t=0}^\infty \frac{\epsilon}{p(t)}$-differential privacy as $t$ goes to infinity. The proof is completed.
\end{proof}
\begin{remark}
   Note that $K_{Q_i,R_i,t}$ in \eqref{eqinput} is a solution without constraints. For the case with polyhedral constraints by \cite{bemporad2002model}, the proposed Algorithm 1 remains effective as long as the computed $K_{Q_i,R_i,t}$ are bounded for all $t\in\mathbb{N}$. The detailed discussion of cases with constraints is beyond the scope of this paper but will be presented in future research.
\end{remark}

\subsection{Consensus Analysis}
Now, we analyze the consensus performance of Algorithm 1 and examine its trade-off with the privacy level. Let $P\triangleq \mathbf{I}_N-\frac{1}{N}\mathbf{1}\mathbf{1}^{\top}$ and define $\delta(t)\triangleq (P\otimes \mathbf{I}_n)X(t)$ as the consensus error, where $X(t)=\left[\begin{matrix}
    x_1^{\top}(t) & x_2^{\top}(t)&\ldots&x_N^{\top}(t)
\end{matrix}\right]^{\top}$ and $\delta(t)=\left[\begin{matrix}
    \delta_1^{\top}(t) & \delta_2^{\top}(t)&\ldots&\delta_N^{\top}(t)
\end{matrix}\right]^{\top}$ with $\delta_i\in\mathbb{R}^n,\ i=1,2,\ldots,N$.

\begin{lemma}\label{lem4}
    For a sequence $a_n$ satisfying 
    \begin{equation}\label{eq:a_n}
     a_n\leq (1+c_n)a_{n-1}+d_n\sum\limits_{k=1}^n a_{n-k},   
    \end{equation}
    where $c_n\ge 0, d_n\ge 0$, $\forall\, n\in \mathbb{N}$, $a_n$ is bounded for all $ n\in \mathbb{N}$ if $\sum\limits_{n=1}^\infty c_n+nd_n<\infty$.
\end{lemma}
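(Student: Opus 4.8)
The plan is to prove the boundedness by relating the growth of $a_n$ to a convergent product. First I would observe that the recursion \eqref{eq:a_n} is monotone in the sense that if we define $b_n\triangleq \max_{0\le k\le n} a_k$, then $b_n$ is nondecreasing and satisfies $a_n\le (1+c_n)b_{n-1}+n d_n b_{n-1}=(1+c_n+nd_n)b_{n-1}$, since each of the $n$ terms $a_{n-k}$ in the sum is at most $b_{n-1}$ and $a_{n-1}\le b_{n-1}$. Because the right-hand side is at least $b_{n-1}$ (as $c_n,d_n\ge 0$), we also get $b_n=\max\{b_{n-1},a_n\}\le (1+c_n+nd_n)b_{n-1}$.

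Next I would iterate this one-step bound: $b_n\le b_0\prod_{k=1}^{n}(1+c_k+kd_k)$. The standard inequality $\prod_{k=1}^{n}(1+\alpha_k)\le \exp\!\bigl(\sum_{k=1}^{n}\alpha_k\bigr)$ for nonnegative $\alpha_k$ then gives $b_n\le b_0\exp\!\bigl(\sum_{k=1}^{n}(c_k+kd_k)\bigr)\le b_0\exp\!\bigl(\sum_{k=1}^{\infty}(c_k+kd_k)\bigr)$, and the hypothesis $\sum_{n=1}^{\infty}(c_n+nd_n)<\infty$ makes this last quantity a finite constant independent of $n$. Since $a_n\le b_n$, this bounds $a_n$ uniformly, completing the argument.

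The only subtle point — and the step I would be most careful about — is the passage from the given recursion, which involves all the past terms $a_0,\dots,a_{n-1}$, to a clean one-step recursion for $b_n$; this is exactly where the coefficient $nd_n$ (rather than just $d_n$) enters, because there are $n$ summands each bounded by $b_{n-1}$. One should also note the implicit assumption $a_n\ge 0$ (or at least that the $a_n$ are eventually sign-definite), which is needed so that $b_n$ is a genuine envelope and so that the product bound is meaningful; in the intended application $a_n$ will be (a bound on) $\mathbb{E}\|\delta(t)\|^2$, which is automatically nonnegative. No other obstacle is anticipated: once the envelope $b_n$ is in place, everything reduces to the elementary product-versus-exponential estimate.
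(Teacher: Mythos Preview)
Your argument is correct and essentially identical to the paper's proof: both construct a nondecreasing envelope sequence dominating $a_n$, reduce \eqref{eq:a_n} to the one-step inequality $b_n\le(1+c_n+nd_n)b_{n-1}$, and then bound the resulting product by $\exp\bigl(\sum_{k\ge1}(c_k+kd_k)\bigr)$. Your explicit choice $b_n=\max_{0\le k\le n}a_k$ and your remark on the implicit nonnegativity assumption are in fact slightly cleaner than the paper's presentation, but the idea is the same.
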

\begin{proof}
    Assume that there exists a monotonically increasing sequence $\beta_n$ such that $a_n\leq \beta_n,\,\forall n\ge0$. According to \eqref{eq:a_n}, we have $a_n\leq (1+c_n)\beta_{n-1}+d_n\sum\limits_{k=1}^n \beta_{n-k}\leq(1+c_n+nd_n)\beta_{n-1}$, which is fulfilled if $\beta_n$ follows $\beta_n=(1+c_n+nd_n)\beta_{n-1}$. In such a case,
    \begin{align*}\beta_n=\mathrm{e}^{\sum\limits_{t=1}^{n}\ln(1+c_t+td_t)}\beta_0.
    \end{align*} 
    Due to the fact that $\ln(1+c_t+td_t)\leq c_t+td_t$ provided $c_t+td_t>0$, it holds that 
        \begin{align*}
\beta_n&=\mathrm{e}^{\sum\limits_{t=1}^{n}\ln(1+c_t+td_t)}\beta_0\leq \mathrm{e}^{\sum\limits_{t=1}^{n}c_t+td_t}\beta_0 \leq \mathrm{e}^{\sum\limits_{t=1}^{\infty}c_t+td_t}\beta_0.
    \end{align*}
    Given $\sum_{n=1}^\infty c_n+nd_n<\infty$, it is immediate to show that 
$\beta_n$ is bounded for all $n\in\mathbb{N}$, which, in turn, implies the boundedness of $a_n,\,\forall n\in\mathbb{N}$. The proof is completed.
\end{proof}

We now impose an additional assumption on $c(t)$ and $p(t)$.
\begin{assumption}\label{ass3}
$\sum\limits_{t=0}^\infty c^2(t)=C_1<\infty$, $\kappa_{\epsilon} c^2(t)p^2(t)\leq1, \forall t \in \mathbb{N} $ and $\sum\limits_{t=1}^\infty tc^2(t)c^2(t-1)p^2(t-1)=C_2< \infty$.
\end{assumption}
where $\kappa_{\epsilon}$ is a constant influenced by the privacy budget.
\begin{align}\label{ead1}
\kappa_{\epsilon}\triangleq\frac{4n^3N(N-1)\hat{\triangle}^2}{\epsilon^2}.
\end{align}
with \begin{equation}\label{e9}
    \hat{\triangle}\triangleq\max_{i=1,\ldots,N}\hat{\triangle}_i,\ \hat{\triangle}_i\triangleq\sup\limits_{t=0,1,\ldots}\triangle_{i,t}.
\end{equation} 
\begin{theorem}\label{the2}
    Given system \eqref{e1} satisfying Assumption \ref{ass1}, and $c(t)$ and $p(t)$ satisfying Assumptions~\ref{ass2}-\ref{ass3}. the second moment of the consensus error, $\mathbb{E}[\delta^{\top}(t)\delta(t)]$, is bounded $\forall t\in \mathbb{N}$ under Algorithm 1.
\end{theorem}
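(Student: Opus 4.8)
Here is how I would attack the statement.

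The plan is to write the closed loop as a linear time-varying recursion for the consensus error driven by the injected noise, reduce it to a scalar second-moment inequality matching the hypothesis of Lemma~\ref{lem4}, and conclude with that lemma. Set $\eta(t)\triangleq[\eta_1^{\top}(t),\ldots,\eta_N^{\top}(t)]^{\top}$, $\mathbf{K}(t)\triangleq\mathrm{diag}\{K_{Q_1,R_1,t},\ldots,K_{Q_N,R_N,t}\}$, $y(t)\triangleq\mathbb{E}[\delta^{\top}(t)\delta(t)]$ and $z(t)\triangleq\mathbb{E}[\|\eta(t)\|^2]$. Stacking \eqref{e6} gives $X(t+1)=(\mathbf{I}-c(t)\mathbf{K}(t)(\mathcal{L}\otimes\mathbf{I}_n))X(t)+c(t)\mathbf{K}(t)(\mathcal{A}\otimes\mathbf{I}_n)\eta(t)$; left-multiplying by $P\otimes\mathbf{I}_n$ and using $\mathcal{L}\mathbf{1}=0$ (Lemma~\ref{lem1}) yields $\delta(t+1)=A(t)\delta(t)+B(t)\eta(t)$ with $A(t)=\mathbf{I}-c(t)G(t)$, $G(t)=(P\otimes\mathbf{I}_n)\mathbf{K}(t)(\mathcal{L}\otimes\mathbf{I}_n)$ and $B(t)=c(t)(P\otimes\mathbf{I}_n)\mathbf{K}(t)(\mathcal{A}\otimes\mathbf{I}_n)$. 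Since $\mathcal{Q}_i,\mathcal{R}_i$ are finite, $\bar K\triangleq\sup_{t,i}\|K_{Q_i,R_i,t}\|<\infty$, so there is a constant $M_1$ with $\|G(t)\|\le M_1$ and $\|B(t)\|\le\bar K\|\mathcal{A}\|\,c(t)$ for all $t$.

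Introduce the filtration $\mathcal{F}_t=\sigma(\{\eta_i(s):i,\ s\le t\})$. Then $X(t)$, $\delta(t)$ and the Laplace scale in \eqref{eq:laplace} are $\mathcal{F}_{t-1}$-measurable while $\eta(t)$ is conditionally zero-mean given $\mathcal{F}_{t-1}$, so the cross term vanishes and $y(t+1)=\mathbb{E}\|A(t)\delta(t)\|^2+\mathbb{E}\|B(t)\eta(t)\|^2\le(1+\gamma_t)\,y(t)+\bar K^2\|\mathcal{A}\|^2c^2(t)\,z(t)$, where $\gamma_t\triangleq 2M_1c(t)+M_1^2c^2(t)$. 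The quadratic prefactor is harmless because Assumptions~\ref{ass2}--\ref{ass3} force $c(t)\le 1/(\sqrt{\kappa_\epsilon}\,p(t))$, whence $\sum_t c(t)\le\kappa_\epsilon^{-1/2}\sum_t 1/p(t)<\infty$; thus $\gamma_t$ is summable, and no sign-definiteness of $G(t)$ on the consensus subspace is required.

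Next I bound $z(t)$ by the past consensus errors. By \eqref{eq:laplace}, each component of $\eta_i(t+1)$ has variance $2\big(nc(t)p(t)\triangle_{i,t}\|v_i(t)\|_\infty/\epsilon\big)^2$ with $v_i(t)=\sum_{j\in\mathcal{N}_i}a_{ij}(\hat{x}_j(t)-x_i(t))$; stacking, using $\triangle_{i,t}\le\hat{\triangle}$, $\sum_i\|v_i(t)\|_\infty^2\le\|v(t)\|^2$ and $v(t)=-(\mathcal{L}\otimes\mathbf{I}_n)\delta(t)+(\mathcal{A}\otimes\mathbf{I}_n)\eta(t)$, one obtains $z(t+1)\le r(t)\,y(t)+q(t)\,z(t)$, where $\phi(t)\triangleq 2n^3\hat{\triangle}^2c^2(t)p^2(t)/\epsilon^2$, $r(t)\triangleq 2\|\mathcal{L}\|^2\phi(t)$ and $q(t)\triangleq 2\|\mathcal{A}\|^2\phi(t)$. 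Since $\|\mathcal{A}\|^2\le N(N-1)$, Assumption~\ref{ass3} gives $q(t)\le\kappa_\epsilon c^2(t)p^2(t)\le1$. Unrolling from $z(0)=0$ gives $z(t)\le\sum_{s=0}^{t-1}r(s)\big(\prod_{\ell=s+1}^{t-1}q(\ell)\big)y(s)$, and since $r(s)\prod_{\ell=s+1}^{t-1}q(\ell)=\tfrac{\|\mathcal{L}\|^2}{\|\mathcal{A}\|^2}\prod_{\ell=s}^{t-1}q(\ell)\le\tfrac{\|\mathcal{L}\|^2}{\|\mathcal{A}\|^2}q(t-1)$ (every $q(\ell)\le1$), we get $z(t)\le\tfrac{\|\mathcal{L}\|^2}{\|\mathcal{A}\|^2}q(t-1)\sum_{s=0}^{t-1}y(s)$. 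Inserting this and $q(t-1)\le\kappa_\epsilon c^2(t-1)p^2(t-1)$ into the $y$-recursion yields $y(t+1)\le(1+\gamma_t)\,y(t)+\mu_t\sum_{s=0}^{t-1}y(s)$ with $\mu_t\triangleq\bar K^2\|\mathcal{L}\|^2\kappa_\epsilon\, c^2(t)c^2(t-1)p^2(t-1)$. With $a_n\triangleq y(n)$ and $\sum_{s=0}^{n-2}a_s\le\sum_{k=1}^n a_{n-k}$, this is the hypothesis of Lemma~\ref{lem4} (its $c_n,d_n$ being $\gamma_{n-1},\mu_{n-1}$); moreover $\sum_n\gamma_{n-1}<\infty$ by the above, and $\sum_n n\,\mu_{n-1}\le 2\bar K^2\|\mathcal{L}\|^2\kappa_\epsilon\sum_m m\,c^2(m)c^2(m-1)p^2(m-1)<\infty$ by Assumption~\ref{ass3}. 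Lemma~\ref{lem4} then yields $\sup_t y(t)<\infty$, i.e.\ $\mathbb{E}[\delta^{\top}(t)\delta(t)]$ is bounded for all $t\in\mathbb{N}$.

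I expect the third step — propagating the \emph{consensus-error-dependent} noise variance back in time — to be the main obstacle. It is exactly where the scale in \eqref{eq:laplace} pays off (so that $z(t+1)$ is controlled by $\|v(t)\|^2$ rather than by a fixed constant) and where the precise form of Assumption~\ref{ass3} is indispensable: the constraint $\kappa_\epsilon c^2(t)p^2(t)\le1$ forces the self-coupling coefficient $q(t)\le1$, so the unrolled product collapses to the single factor $q(t-1)\propto c^2(t-1)p^2(t-1)$, which combined with the $c^2(t)$ from $\|B(t)\|^2$ gives precisely the term $c^2(t)c^2(t-1)p^2(t-1)$ whose $n$-weighted sum Assumption~\ref{ass3} is built to keep finite — the summability condition required by Lemma~\ref{lem4}. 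A smaller, easily overlooked point is that $\sum_t 1/p(t)<\infty$ together with $\kappa_\epsilon c^2(t)p^2(t)\le1$ already forces $\sum_t c(t)<\infty$, which is what lets the crude bound $\|A(t)\|\le1+M_1c(t)$ suffice.
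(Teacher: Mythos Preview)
Your proposal is correct and follows the same architecture as the paper's proof: derive the closed-loop recursion for $\delta(t)$, reduce to a scalar second-moment inequality, bound the noise second moment $z(t)$ recursively in terms of past $y(s)$ by unrolling (using $\kappa_\epsilon c^2(t)p^2(t)\le1$ to collapse the product), and invoke Lemma~\ref{lem4}.

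The one substantive difference is how you handle the drift operator $A(t)=\mathbf{I}-c(t)G(t)$. The paper expands $V(t+1)$ exactly and, in passing from \eqref{e12} to \eqref{e20}, \emph{drops} the linear cross term $-2c(t)\delta^\top(t)\tfrac{\bar{P}K_t\bar{\mathcal{L}}+(\bar{P}K_t\bar{\mathcal{L}})^\top}{2}\delta(t)$ ``due to its negativity'', thereby obtaining the multiplier $1+c^2(t)\rho_{\mathcal{L}}\bar{\triangle}^2$, whose excess is summable directly from $\sum_t c^2(t)<\infty$. You instead use the crude bound $\|A(t)\|^2\le(1+M_1c(t))^2$, which produces an $O(c(t))$ excess, and then observe---correctly---that Assumptions~\ref{ass2} and~\ref{ass3} together force $c(t)\le(\sqrt{\kappa_\epsilon}\,p(t))^{-1}$ and hence $\sum_t c(t)<\infty$. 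Your route is more robust: it does not need the symmetrized matrix $\bar{P}K_t\bar{\mathcal{L}}+(\bar{P}K_t\bar{\mathcal{L}})^\top$ to be positive semidefinite on the consensus subspace, a sign condition the paper invokes without proof at this stage (and which is nontrivial for directed graphs). The paper's route, on the other hand, yields a tighter multiplier that becomes useful in the subsequent Theorem~\ref{the3}, where the negative linear term is retained to quantify the error bound $\sigma$; your purely norm-based estimate would not deliver that refinement.
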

\begin{proof}
Consider $\bar{\mathcal{L}}=\mathcal{L}\otimes\mathbf{I}_n,\bar{\mathcal{A}}=\mathcal{A}\otimes\mathbf{I}_n,\bar{P}=P\otimes\mathbf{I}_n$. From \eqref{eq:noisyinfor} and \eqref{e6}, it can be shown that 
\begin{align}\label{e10}
    X(t+1)=\left(\mathbf{I}_{nN}-c(t)K_t\bar{\mathcal{L}}\right)X(t)+c(t)K_t\bar{\mathcal{A}}\eta(t),
\end{align}where $\eta(t)\!=\!\left[\begin{matrix}
        \eta_1(t)^{\top} & \eta_2(t)^{\top} & \ldots&\eta_N(t)^{\top}
    \end{matrix}\right]^{\top}$,
    \begin{align*}
    K_t\!\!=\!\!\left[\begin{matrix}
        K_{Q_1,R_1,t} & \mathbf{0}_{n\times n} &\ldots &\mathbf{0}_{n\times n}\\\mathbf{0}_{n\times n}&K_{Q_2,R_2,t}&\ldots&\mathbf{0}_{n\times n}\\
        \vdots&\vdots&\ddots&\vdots\\\mathbf{0}_{n\times n}&\mathbf{0}_{n\times n}&\ldots& K_{Q_N,R_N,t}
    \end{matrix}\right].
\end{align*}
Define $\bar{\triangle}_i\triangleq\sup\limits_{{t=0,1,\ldots}}\|K_{Q_i,R_i,t}\|$ and we can find a scalar
\begin{equation}\label{e13}
\bar{\triangle}=\max\limits_{i=1,\ldots,N}\bar{\triangle}_i\ge \max\limits_{i=1,\ldots,N}\|K_{Q_i,R_i,t}\|=\|K_t\|,
\end{equation}for all $t\in\mathbb{N}$. From Lemma \ref{lem1} and the fact that  
$\bar{\mathcal{L}}\delta(t)=(\mathcal{L}P\otimes\mathbf{I}_n)X(t)=\bar{\mathcal{L}}X(t)$, the dynamics of the consensus error follow
\begin{align}\label{e11}
     \delta(t+1)=\left(\mathbf{I}_{nN}-c(t)\bar{P}K_t\bar{\mathcal{L}}\right)\delta(t)+c(t)\bar{P}K_t\bar{\mathcal{A}}\eta(t).
\end{align}
Consider a Lyapunov candidate $V(t)\triangleq\delta^{\top}(t)\delta(t)$. By using the identity $PP^\top=P^2=P$, it can be shown that
\begin{align}\label{e12}
    V(t+1)&= V(t)-2c(t)\delta^{\top}(t)\frac{\bar{P}K_t\bar{\mathcal{L}}+(\bar{P}K_t\bar{\mathcal{L}})^{\top}}{2}\delta(t)\nonumber\\
    &\ \ \, \ +c^2(t)\delta^{\top}(t)\bar{\mathcal{L}}^{\top}K_t^{\top}\bar{P}K_t\bar{\mathcal{L}}\delta(t)\nonumber\\
    &\ \ \, \ +c^2(t)\eta^{\top}(t)\bar{\mathcal{A}}^{\top}K_t^{\top}\bar{P}K_t\bar{\mathcal{A}}\eta(t)\nonumber\\
     &\ \ \, \ +2c(t)\eta^{\top}(t)\bar{\mathcal{A}}^{\top}K_t^{\top}\bar{P}(\mathbf{I}_{nN}-c(t)\bar{P}K_t\bar{\mathcal{L}})\delta(t).
\end{align} From \eqref{eq:laplace}, $\mathbb{E}[\eta(t)]=0$ holds under all circumstances for all $t\in\mathbb{N}$. Combined with the Double Expectation Theorem \cite{billingsley2017probability}, we have $\mathbb{E}[\eta^{\top}(t)\bar{\mathcal{A}}^{\top}K_t^{\top}\bar{P}(\mathbf{I}_{nN}-c(t)\bar{P}K_t\bar{\mathcal{L}})\delta(t)]=\mathbb{E}[\mathbb{E}[\eta^{\top}(t)\bar{\mathcal{A}}^{\top}K_t^{\top}\bar{P}(\mathbf{I}_{nN}-c(t)\bar{P}K_t\bar{\mathcal{L}})\delta(t)|\delta(t)]]=0$. For the third term on the right-hand side of \eqref{e12}, it can be easily shown that
\begin{align*}
    \mathbb{E}[c^2(t)\delta^{\top}(t)\bar{\mathcal{L}}^{\top}K_t^{\top}\bar{P}K_t\bar{\mathcal{L}}\delta(t)]\leq c^2(t)\rho_{\mathcal{L}}\bar{\triangle}^2\mathbb{E}[V(t)],
\end{align*} 
where {$\rho_{\mathcal{L}}=\|\bar{\mathcal{L}}\|^2$}.

Now, we proceed to discuss the fourth term. Similar to the third term, we have
\begin{align*}
    \mathbb{E}[c^2(t)\eta^{\top}(t)\bar{\mathcal{A}}^{\top}K_t^{\top}\bar{P}K_t\bar{\mathcal{A}}\eta(t)]\leq c^2(t)\rho_{\mathcal{A}}\bar{\triangle}^2\mathbb{E}[\eta^{\top}(t)\eta(t)],
\end{align*}where $\rho_{\mathcal{A}}=\|\bar{\mathcal{A}}\|^2$. For the sake of further discussion, let $\eta_{ij}\in\mathbb{R},\, j=1,2,\ldots,n,$ be the $j$th element of $\eta_i,\, i=1,2,\ldots,N$, which, by \eqref{eq:laplace}, are mutually independent Laplace noise. Then
\begin{multline}\label{e14}
   \mathbb{E}[\eta_i^{\top}(t)\eta_i(t)]=\sum\limits_{j=1,\ldots,n}\mathbb{E}\left[\eta_{ij}^{\top}(t)\eta_{ij}(t)\right]\\
   = \frac{2n^3c^2(t-1)p^2(t-1)}{\epsilon^2}\\
\times\hat{\triangle}^2_{i,t}\mathbb{E}\left[\|\sum\limits_{j\in\mathcal{N}_i}a_{ij}[\hat{x}_j(t-1)-x_i(t-1)]\|^2_\infty\right].
\end{multline}
Applying the triangle and the Cauchy–Schwarz inequalities yields
    \begin{align}\label{e15}
    &\mathbb{E}\left[\|\sum\limits_{j\in\mathcal{N}_i}a_{ij}[\hat{x}_j(t-1)-x_i(t-1)]\|^2_\infty\right]\nonumber\\
    & \leq 2(N-1)\Biggl[\sum\limits_{j\in\mathcal{N}_i}a_{ij}\mathbb{E} \left[\|{x}_j(t-1)-x_i(t-1)\|_\infty^2\right] \nonumber\Biggr.\\\Biggl.
    &\quad +\sum\limits_{j\in\mathcal{N}_i}a_{ij}\mathbb{E}\left[\|\eta_j(t-1)\|_\infty^2\right]\Biggr].
\end{align}
Consider the first additive item on the right-hand side of \eqref{e15}:
    \begin{align}\label{e16}
    &\sum\limits_{j\in\mathcal{N}_i}a_{ij}\mathbb{E} \left[\|{x}_j(t-1)-x_i(t-1)\|_\infty^2\right] \nonumber\\
    &\leq 2\sum\limits_{j\in\mathcal{N}_i}a_{ij}\mathbb{E} \left[\|{x}_j(t-1)-\frac{1}{N}\sum\limits_{k=1,\ldots,N}x_k(t-1)\|_\infty^2\right]\nonumber\\
    &\ \ \, \ +2\sum\limits_{j\in\mathcal{N}_i}a_{ij}\mathbb{E} \left[\|{x}_i(t-1)-\frac{1}{N}\sum\limits_{k=1,\ldots,N}x_k(t-1)\|_\infty^2\right]\nonumber\\
    & \leq 2(N-1)\mathbb{E}[V(t-1)].
\end{align} 
The second ``$\leq$'' holds since $\sum\limits_{j\in\mathcal{N}_i}a_{ij}\leq N-1, \forall i,$ and $V(t-1)=\sum\limits_{i=1}^{N}\|{x}_i(t-1)-\frac{1}{N}\sum\limits_{k=1,\ldots,N}x_k(t-1)\|^2$. The second additive item of inequality \eqref{e15} can be upper bounded by
    \begin{align}\label{e17}
        &\sum\limits_{j\in\mathcal{N}_i}a_{ij}\mathbb{E}[\|\eta_j(t-1)\|_\infty^2] \leq \mathbb{E}[\eta^\top(t-1)\eta(t-1)].
    \end{align}
Combining (\ref{ead1}-\ref{e9}) and (\ref{e14}-\ref{e17}), there holds
    \begin{align} \label{e18}
    \mathbb{E}[\eta^\top(t)\eta(t)]&\leq \kappa_{\epsilon} c^2(t-1)p^2(t-1)\Big[2(N-1)\mathbb{E}[V(t-1)]\Big.\nonumber\\
   \Big. & \quad+\mathbb{E}[\eta^\top(t-1)\eta(t-1)]\Big],
\end{align}
Note that $ \mathbb{E}[\eta^\top(0)\eta(0)]=0$ and unfolding the recursive relation in \eqref{e18}, we obtain
\begin{small}
    \begin{align} \label{noiseeq}
    \mathbb{E}[\eta^\top(t)\eta(t)]&\!\leq\!\sum\limits_{k=1}^t\!\left[2(N\!-\!1)\mathbb{E}[V(t\!-\!k)]\prod_{j=1}^k\kappa_{\epsilon} c^2(t\!-\!j)p^2(t\!-\!j)\right]\nonumber\\
&\leq \sum\limits_{k=1}^t 2(N-1)\kappa_{\epsilon} c^2(t-1)p^2(t-1)\mathbb{E}[V(t-k)].
\end{align}
\end{small}
The second ``$\leq$'' follows from Assumption~\ref{ass3}\footnote{
{The boundedness of noise variance is guaranteed if the two sequences further satisfy $\{c(t) p(t)\} \in \ell_2$.}}. 
Taking the expectation of both sides of \eqref{e12} and removing the term $-2c(t)\delta^{\top}(t)\frac{\bar{P}K_t\bar{\mathcal{L}}+(\bar{P}K_t\bar{\mathcal{L}})^{\top}}{2}\delta(t)$ due to its negativity, yields
    \begin{align}\label{e20}
    \mathbb{E}[V(t+1)]\!\leq & (1+ c^2(t)\rho_{\mathcal{L}}\bar{\triangle}^2)\mathbb{E}[V(t)]   +c^2(t)\rho_{\mathcal{A}}\bar{\triangle}^2\sum\limits_{k=1}^t\kappa_{\epsilon} \nonumber \\ &\times  2(N-1) c^2(t-1)p^2(t-1)\mathbb{E}[V(t-k)].
\end{align} 
Following Lemma \ref{lem4} and Assumption~\ref{ass3}, it can be concluded that $\mathbb{E}[V(t)]$ is bounded for all $t$. The proof is completed.
\end{proof}

It is known from Theorem \ref{the2} that there exists an upper bound for $\mathbb{E}[V(t)], \forall t\in\mathbb{N}$. Without loss of generality, we assume the upper bound is $\bar{V}$. To proceed to the final result, further assumption is required for $c(t)$ and $p(t)$. {
\begin{assumption}\label{ass4}
$c^2(t)\rho_{\mathcal{L}}\bar{\triangle}^2-c(t)\rho_{\lambda}>-1,\forall t\ \in \mathbb{N},$
\end{assumption}
\noindent where 
\begin{align}\label{e21}
    {\  \rho_{\lambda}\triangleq\inf\limits_{t=0,1,\ldots}\lambda_{0}(\bar{P}K_t\bar{\mathcal{L}}+(\bar{P}K_t\bar{\mathcal{L}})^{\top}).}
\end{align}}

\begin{theorem}\label{the3}
    Given system \eqref{e1} satisfying Assumption \ref{ass1} and $c(t)$ and $p(t)$ complying with Assumptions~\ref{ass2}-\ref{ass4}, Algorithm 1 ensures bounded consensus while preserving $\sum\limits_{t=0}^{\infty}\frac{\epsilon}{p(t)}$-differential privacy with the error bound
    \begin{align}\label{e22}
\sigma&=\frac{\mathrm{e}^{\rho_{\mathcal{L}}\bar{\triangle}^2C_1-\sum\limits_{t=0}^\infty c(t)\rho_{\lambda}}}{(1+c^2(0)\rho_{\mathcal{L}}\bar{\triangle}^2-c(0)\rho_{\lambda})}V(1)+2(N-1)\rho_{\mathcal{A}}\kappa_{\epsilon}\nonumber\\
       &\quad\times\bar{\triangle}^2 C_2 C_3\bar{V},
    \end{align} 
where $C_3$ is a positive constant that will be defined later. 
$V(1)$ is a constant determined by $V(0),c(0),\eta(0)$ (see \eqref{e12}). 
\end{theorem}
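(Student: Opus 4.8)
The idea is to sharpen the analysis of Theorem~\ref{the2} by \emph{retaining}, rather than discarding, the negative cross term in \eqref{e12}. Since $\delta(t)=\bar P X(t)\in\mathrm{range}(\bar P)$ for every $t$, that term obeys $2c(t)\,\delta^{\top}(t)\,\tfrac{\bar P K_t\bar{\mathcal L}+(\bar P K_t\bar{\mathcal L})^{\top}}{2}\,\delta(t)\ge c(t)\rho_{\lambda}V(t)$, with $\rho_{\lambda}$ as in \eqref{e21}. Taking expectations in \eqref{e12}, reusing the bounds already established for the other three terms and the noise estimate \eqref{noiseeq}, and using $\mathbb{E}[V(\tau)]\le\bar V$ for all $\tau$ (Theorem~\ref{the2}) to replace $\sum_{k=1}^{t}\mathbb{E}[V(t-k)]$ by $t\bar V$, I obtain the scalar recursion
\begin{align*}
\mathbb{E}[V(t+1)]\le a_t\,\mathbb{E}[V(t)]+b_t,\qquad t\ge 1,
\end{align*}
with $a_t\triangleq 1-c(t)\rho_{\lambda}+c^2(t)\rho_{\mathcal L}\bar{\triangle}^2$ and $b_t\triangleq 2(N-1)\rho_{\mathcal A}\kappa_{\epsilon}\bar{\triangle}^2\,t\,c^2(t)c^2(t-1)p^2(t-1)\bar V$; Assumption~\ref{ass4} is exactly the condition $a_t>0$, and $V(1)$ is the deterministic quantity delivered by \eqref{e12} at $t=0$ (recall $\eta(0)=\mathbf{0}$).

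Unrolling this recursion from $t=1$ gives
\begin{align*}
\mathbb{E}[V(t+1)]\le\Big(\prod_{s=1}^{t}a_s\Big)V(1)+\sum_{s=1}^{t}\Big(\prod_{r=s+1}^{t}a_r\Big)b_s .
\end{align*}
For the homogeneous part I write $\prod_{s=1}^{t}a_s=a_0^{-1}\prod_{s=0}^{t}a_s$ with $a_0=1+c^2(0)\rho_{\mathcal L}\bar{\triangle}^2-c(0)\rho_{\lambda}$, apply $\ln(1+x)\le x$ (valid since each $a_s>0$ by Assumption~\ref{ass4}), and bound $\sum_s c^2(s)\le C_1$ to get $\prod_{s=1}^{t}a_s\le a_0^{-1}\,\mathrm{e}^{\rho_{\mathcal L}\bar{\triangle}^2 C_1-\rho_{\lambda}\sum_{s=0}^{t}c(s)}$. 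For each tail product I drop the nonpositive $-c(r)\rho_{\lambda}$ and again use $\sum c^2\le C_1$, so $\prod_{r=s+1}^{t}a_r\le \mathrm{e}^{\rho_{\mathcal L}\bar{\triangle}^2 C_1}\triangleq C_3$; hence the inhomogeneous part is at most $C_3\sum_{s=1}^{t}b_s=2(N-1)\rho_{\mathcal A}\kappa_{\epsilon}\bar{\triangle}^2 C_3\bar V\sum_{s=1}^{t}s\,c^2(s)c^2(s-1)p^2(s-1)\le 2(N-1)\rho_{\mathcal A}\kappa_{\epsilon}\bar{\triangle}^2 C_2 C_3\bar V$ by Assumption~\ref{ass3}. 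Letting $t\to\infty$ and noting $\sum_{s=0}^{t}c(s)\uparrow\sum_{s=0}^{\infty}c(s)$ (so the first term simply vanishes if that series diverges) yields $\limsup_{t}\mathbb{E}[V(t)]\le\sigma$ with $\sigma$ exactly as in \eqref{e22}. Since $\mathbb{E}[V(t)]=\sum_{i=1}^{N}\mathbb{E}\big[\|x_i(t)-\tfrac{1}{N}\sum_{j}x_j(t)\|^2\big]$ and $\mathbb{E}[V(t)]$ is uniformly bounded (Theorem~\ref{the2}), bounded consensus in the sense of Problem~\ref{pro1} follows, and the $\sum_{t}\epsilon/p(t)$-differential privacy is inherited directly from Theorem~\ref{the1}.

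The step I expect to be the main obstacle is the lower bound $\delta^{\top}(t)\big(\bar P K_t\bar{\mathcal L}+(\bar P K_t\bar{\mathcal L})^{\top}\big)\delta(t)\ge\rho_{\lambda}\|\delta(t)\|^2$: this requires the symmetrised matrix to be positive semidefinite \emph{on $\mathrm{range}(\bar P)$} for every $t$, so that $\rho_{\lambda}$ in \eqref{e21} is well defined and strictly positive. Verifying it is where Assumption~\ref{ass1} (a spanning tree, hence $\mathcal L\mathbf{1}=0$ and zero a simple eigenvalue of $\mathcal L$ by Lemma~\ref{lem1}) and the structure and uniform boundedness of the LQR gains $K_{Q_i,R_i,t}$ genuinely enter. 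The convergence of the infinite products and series above, and the validity of passing to the limit in $t$, are then guaranteed by Assumptions~\ref{ass2}-\ref{ass4}; everything else is the routine unrolling and $\ln(1+x)\le x$ bookkeeping sketched here.
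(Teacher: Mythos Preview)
Your proposal is correct and follows essentially the same route as the paper's proof: retain the cross term via the lower bound $\delta^{\top}(\bar P K_t\bar{\mathcal L}+(\bar P K_t\bar{\mathcal L})^{\top})\delta\ge\rho_{\lambda}V$, derive the scalar recursion $\mathbb{E}[V(t+1)]\le a_t\mathbb{E}[V(t)]+b_t$, unroll it from $V(1)$, and bound the resulting products with $\ln(1+x)\le x$ together with Assumptions~\ref{ass3}--\ref{ass4}. The only substantive difference is your treatment of $C_3$: you drop the nonpositive $-c(r)\rho_{\lambda}$ in each tail product to obtain the explicit constant $C_3=\mathrm{e}^{\rho_{\mathcal L}\bar{\triangle}^2 C_1}$, whereas the paper asserts the existence of a uniform bound $C_3$ on the tails $\prod_{j=t}^{\infty}a_j$ from convergence of the full product---your version is cleaner and gives a concrete value, at the cost of a possibly larger constant.
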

\begin{proof}
Given $\mathbb{E}[V(t)]$ is bounded, it follows from $\delta(t)$ $=\bar{P}X(t)$ that $\mathbb{E}[X^\top(t)X(t)]$ is bounded, which then implies the boundedness of $\mathbb{E}[\|x_i(t)\|^2]$.

Recall \eqref{e12}, \eqref{e21} {and refer to \cite{li2018distributed}}, there holds
\begin{align*}
    2c(t)\delta^{\top}(t)\frac{\bar{P}K_t\bar{\mathcal{L}}+(\bar{P}K_t\bar{\mathcal{L}})^{\top}}{2}\delta(t)\ge c(t)\rho_{\lambda}V(t).
\end{align*}
With reference to \eqref{e20}, it can then be obtained that
\begin{align*}
    \mathbb{E}[V(t+1)]&\leq \!(1+ c^2(t)\rho_{\mathcal{L}}\bar{\triangle}^2\!-\!c(t)\rho_{\lambda})\mathbb{E}[V(t)]   
   \!+\!2t(N\!-\!1)\\
   &\quad \times \rho_{\mathcal{A}} \kappa_{\epsilon}\bar{\triangle}^2 c^2(t) c^2(t-1)p^2(t-1)\bar{V}\\
   &=V(1)\!\!\prod_{k=1}^t(1\!+\! c^2(k)\rho_{\mathcal{L}}\bar{\triangle}^2\!-\!c(k)\rho_{\lambda})\!+\!2(N\!-\!1)\\
   &\quad\times\rho_{\mathcal{A}}\kappa_{\epsilon}\bar{\triangle}^2\bar{V}\sum\limits_{k=1}^{t}\Bigg[ kc^2(k) c^2(k-1)p^2(k-1)\Bigg.\\
   &\Bigg.\quad \times\prod\limits_{j=k}^t (1+ c^2(j)\rho_{\mathcal{L}}\bar{\triangle}^2-c(j)\rho_{\lambda})\Bigg].
\end{align*} 
{Under Assumption~\ref{ass4}, we have}
\begin{align*}
\lim\limits_{t\rightarrow\infty}\prod_{k=0}^t(1+ c^2(k)\rho_{\mathcal{L}}\bar{\triangle}^2-c(k)\rho_{\lambda}) \leq \mathrm{e}^{\rho_{\mathcal{L}}\bar{\triangle}^2C_1-\sum\limits_{k=0}^\infty c(k)\rho_{\lambda}}.
\end{align*} 
which implies the existence of $C_3$, such that
\begin{equation}\label{ec3}
    \prod\limits_{j=t}^\infty (1+ c^2(j)\rho_{\mathcal{L}}\bar{\triangle}^2\!-\!c(j)\rho_{\lambda})<C_3, \forall t\in\mathbb{N}.
\end{equation}
Therefore, the following inequality holds
\begin{multline*}
    \lim\limits_{t\rightarrow\infty} \sum\limits_{k=1}^{t} \left[ kc^2(k) c^2(k\!-\!1)p^2(k\!-\!1)
   \right.\\ \left.\prod\limits_{j=k}^t (1+ c^2(j) \rho_{\mathcal{L}}\bar{\triangle}^2\!-\!c(j)\rho_{\lambda})\right]\! 
   < {C_2C_3}.
   \end{multline*}
   We obtain $\mathbb{E}[V(t+1)]\leq\sigma$ as $t\rightarrow \infty$, which is given in \eqref{e22}. The proof is completed.
\end{proof}

Equation \eqref{e22} reveals the impact of the privacy level $\epsilon$ on the error bound $\sigma$. If $\epsilon\rightarrow \infty$, then $\kappa_{\epsilon}\rightarrow 0$ from \eqref{ead1}, leading to $2(N-1)\rho_{\mathcal{A}}\kappa_{\epsilon}\bar{\triangle}^2 C_2  \mathrm{e}^{\rho_{\mathcal{L}}\bar{\triangle}^2C_1-\sum_{k=0}^\infty c(k)\rho_{\lambda}}\bar{V}\rightarrow0$. Note that $\kappa_{\epsilon} c^2(t)p(t)^2\leq1$ from Assumption~\ref{ass3}, if $c(t)$ is designed such that ${\rho_{\mathcal{L}}\bar{\triangle}^2C_1-\sum_{t=0}^\infty c(t)\rho_{\lambda}}\leq0$, there holds ${\rho_{\mathcal{L}}\bar{\triangle}^2C_1-\sum_{t=0}^\infty c(t)\rho_{\lambda}}=-\mathit{O}(\frac{1}{\sqrt{\kappa_{\epsilon}}})\rightarrow -\infty$ as $\kappa_{\epsilon}\rightarrow 0$, and finally $$\frac{\mathrm{e}^{\rho_{\mathcal{L}}\bar{\triangle}^2C_1-\sum_{k=0}^\infty c(k)\rho_{\lambda}}}{(1+c^2(0)\rho_{\mathcal{L}}\bar{\triangle}^2-c(0)\rho_{\lambda})}V(1)\rightarrow 0.$$ 
On the other hand, \eqref{e22} indicates that $\epsilon$ cannot be selected too small; otherwise, $\mathbb{E}[V(t)]$ may exceed $V(0)$, which is usually not acceptable. A properly chosen $\epsilon$ should make $2(N-1)\rho_{\mathcal{A}}\kappa_{\epsilon}\bar{\triangle}^2 C_2  \mathrm{e}^{\rho_{\mathcal{L}}\bar{\triangle}^2C_1-\sum_{k=0}^\infty c(k)\rho_{\lambda}}$ sufficiently small and strictly less than $1$, with well-selected $Q_i$, $R_i$, $c(t)$ and $p(t)$. In such a case, $\bar{V}$ can be computed a priori and verified
\begin{align*}
\bar{V}=&\left[1-2(N-1)\rho_{\mathcal{A}}\kappa_{\epsilon}\bar{\triangle}^2 C_2  \mathrm{e}^{\rho_{\mathcal{L}}\bar{\triangle}^2C_1-\sum\limits_{k=0}^\infty c(k)\rho_{\lambda}}\right]^{-1}\\&\times\frac{\mathrm{e}^{\rho_{\mathcal{L}}\bar{\triangle}^2C_1-\sum\limits_{k=0}^\infty c(k)\rho_{\lambda}}}{(1+c^2(0)\rho_{\mathcal{L}}\bar{\triangle}^2-c(0)\rho_{\lambda})}V(1).
\end{align*}

\begin{remark}[Parameter choice]
    Note that $\kappa_{\epsilon} c^2(t)p^2(t)$ $\leq1, \forall t \in \mathbb{N}, $ in Assumption~\ref{ass3} is a sufficient condition for the validity of Theorem \ref{the3}. To achieve a smaller $\sigma$, it is reasonable to initialize the early terms of the sequence $c(t)$ at relatively large values (while complying with $\kappa_{\epsilon} c^2(t)p^2(t)\leq1$). 
    The impact of the weight matrices $Q_i$ and $R_i$ on the error bound $\sigma$ is multi-coupled and depends on the system parameters $n$ and $N$, as they simultaneously affect $\rho_{\lambda}$, $\bar{\triangle}$ and $\kappa_{\epsilon}$ in \eqref{e22}. This aspect will be further investigated in future work. Additionally, if the privacy budget $\sum\limits_{t=0}^{\infty}\frac{\epsilon}{p(t)}$ is not necessarily bounded (i.e., it can be understood that only finite-step privacy is required), then $c(t)$ can be selected such that $\sum_{t=0}^\infty c(t)\rightarrow\infty$, $\sum_{t=0}^\infty c^2(t)< \infty$, {and hence  $C_3\rightarrow 0$ for a reduced error bound}. It can be inferred from \eqref{e22} that $\sigma \rightarrow 0,\,t\rightarrow \infty$ and the system can achieve accurate mean-square consensus. 
\end{remark}

\section{Numerical Simulation}
Consider a 4-agent network in the form of \eqref{e1} with the communication graph given by 
\begin{align*}
    \mathcal{A}=\left[\begin{matrix}
        0 & 1 & 1 & 0\\0 & 0 &0 &1\\0 & 0 &0 &1\\1 & 0 & 0 &0
    \end{matrix}\right],\quad \mathcal{L}=\left[\begin{matrix}
        2 & -1 & -1 & 0\\0 & 1 &0 &-1\\0 & 0 &1 &-1\\-1 & 0 & 0 &1
    \end{matrix}\right].
\end{align*}The initial states are $x_1(0)=[1 \,10 \,20]^{\top}$, $x_2(0)=[14 \,20 \,7]^{\top}$, $x_3(0)=[30 \,70 \,50]^{\top}$, $x_4(0)=[55 \,43 \,34]^{\top}$.
Assign a privacy budget $\epsilon=5$, and set the rolling horizon length and $Q_i$, $R_i$ respectively to $T=10$ and  
\begin{align*}
    \mathcal{Q}_i=\left[\begin{matrix}
       q &0 & 0\\0 &q &0\\0 & 0& q
    \end{matrix}\right],  \mathcal{R}_i=\left[\begin{matrix}
      r &0 & 0\\0 &r &0\\0 & 0& r
    \end{matrix}\right],\ i=1,2,3,4.
\end{align*} 
where $q\in\{6,7,8\}$, $r\in\{2,3\}$ defines the feasible sets of $Q_i$, $R_i$.
Select 
\begin{align*}
    Q_i=\left[\begin{matrix}
       8 &0 & 0\\0 &8 &0\\0 & 0& 8
    \end{matrix}\right],  {R}_i=\left[\begin{matrix}
      2 &0 & 0\\0 &2 &0\\0 & 0& 2
    \end{matrix}\right],\ i=1,2,3,4.
\end{align*} Design $c(t)=\frac{1}{15(t+1)^{1.3}}$ and $p(t)={(t+1)^{1.1}}, \forall t\in\mathbb{N}$. The agents' states generated by Algorithm 1 are depicted in Fig.~\ref{fig.1}, which verifies its bounded consensus properties with guaranteed differential privacy. In particular, if an adversary continuously eavesdrops from time $t_1=0$ to $t_2=120$, the total privacy leakage is $\sum_{t=0}^{120}\frac{\epsilon}{p(t)}\approx21.9828$.

\begin{figure}[htp!]
 \centering
  \includegraphics[width=\columnwidth]{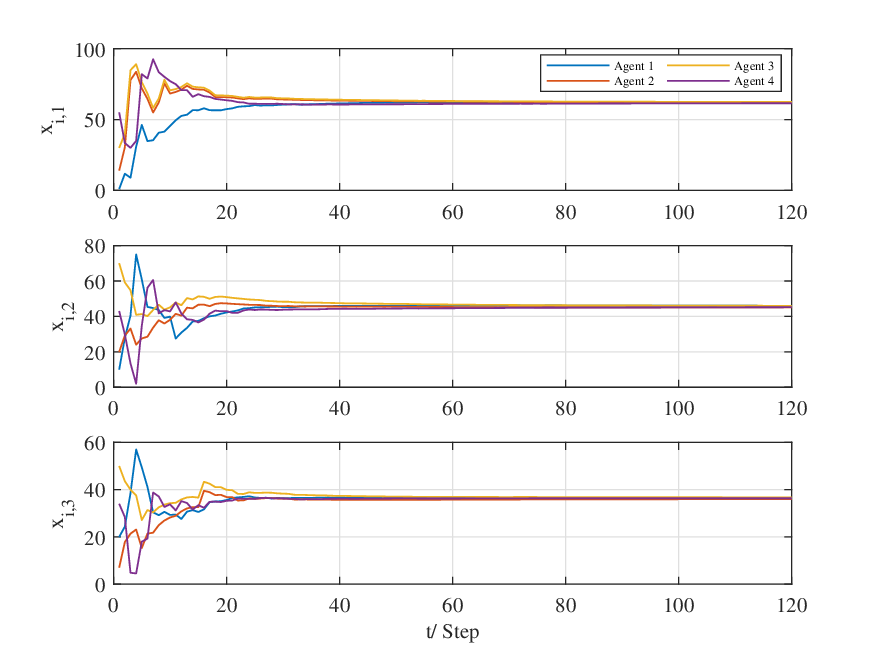}\\[-2.2ex]
  \caption{State trajectories of agents.}
  \label{fig.1}
\end{figure}

\section{Conclusion}
This paper proposes a distributed LQR consensus control algorithm with differential privacy guarantees. It is rigorously proven that under the proposed algorithm, both bounded consensus and differential privacy can be achieved without requiring additional assumptions on the system or the injected noises, other than graph connectivity. A numerical example is provided to show the effectiveness of the proposed protocol. Since the noise structure in this paper shares a similar form with the noise model inherent in the channel \cite{li2018distributed}, this raises the question of whether inherent channel noise itself could provide a certain level of differential privacy. This topic will be explored in our future research. 

\balance
\bibliographystyle{IEEEtran}
\normalem
\bibliography{list}

\end{document}